\nonstopmode
\documentclass[a4,11pt,en]{quick-document}

\usepackage{tikz, pgf}
\usepackage{tikz-qtree}

\def\flip{\mbox{\textnormal{flip()}}}

\newenvironment{algowithlines}[1][0.6\linewidth]{\bgroup\minipage{#1}\vspace{1em}\algorithmic[1]}{\endalgorithmic\vspace{0.8em}\endminipage\egroup}
\newenvironment{algo}[1][0.6\linewidth]{\bgroup\minipage{#1}\vspace{1em}\algorithmic}{\endalgorithmic\vspace{0.8em}\endminipage\egroup}
\def\imagesPath#1{images/#1}
\newcommand{\includeImage}[2][scale=1.00]{%
    \includegraphics[#1]{\imagesPath{#2}}}

\title{Optimal Discrete Uniform Generation from Coin~Flips, and Applications}

\author{J\'{e}r\'{e}mie Lumbroso}
\date{\today}

\def\isFDRarticle{1}

\begin{document}

\maketitle

\begin{abstract}
  This article introduces an algorithm to draw random discrete uniform
  variables within a given range of size $n$ from a source of random
  bits.  The algorithm aims to be simple to implement and optimal both
  with regards to the amount of random bits consumed, and from a
  computational perspective---allowing for faster and more efficient
  Monte-Carlo simulations in computational physics and biology. I also
  provide a detailed analysis of the number of bits that are spent per
  variate, and offer some extensions and applications, in particular
  to the optimal random generation of permutations.
\end{abstract}

\def\algFDR{\textsc{Fast Dice Roller}}
\def\algFDRs{\textsc{FDR}}
\newcommand{\Mellin}[2][x,s]{\ensuremath{\mathcal{M}\!\left[#2,#1\right]}}
\newcommand{\MellinStrip}[1][\alpha,\beta]{\ensuremath{<\!\!{#1}\!\!>}}

\noindent Now that simulations can be run extremely fast, they are
routinely able to consume over a billion random variates an hour---and
several orders of magnitude more throughout an execution. At such a point,
the amount of pseudo-randomness at our disposal may eventually become
a real issue, and it is pertinent to devise techniques that are economical
with respect to the amount of randomness consumed, while remaining as or
more efficient than existing techniques with regards to speed and space
usage.

\paragraph{The random-bit model.}

Much research has gone into simulating probability distributions, with
most algorithms designed using infinitely precise \emph{continuous uniform
  random} variables (see \cite[II.3.7]{Devroye86}). But because
(pseudo-)randomness on computers is typically provided as 32-bit
integers---and even bypassing issues of true randomness and bias---this
model is questionable. Indeed as these integers have a fixed precision,
two questions arise: when are they not precise enough? when are they too
precise? These are questions which are usually ignored in typical
fixed-precision implementations of the aforementioned algorithms. And it
suggests the usefulness of a model where the unit of randomness is not the
uniform random variable, but the \emph{random bit}.

This random bit model was first suggested by Von Neumann~\cite{Neumann51},
who humorously objected to the use of fixed-precision pseudo-random
uniform variates in conjunction with transcendant functions approximated
by truncated series\footnote{Or in his words, as related by Forsythe:
  ``\textit{I have a feeling, however, that it is somehow silly to take
    a random number and put it elaborately into a power series.}''}. His
remarks and algorithms spurred a fruitful line of theoretical research
seeking to determine \emph{which} probabilities can be simulated using
only random bits (unbiased or biased? with known or unknown bias?), with
which complexity (expected number of bits used?), and which guarantees
(finite or infinite algorithms? exponential or heavy-tailed time
distribution?). Within the context of this article, we will focus on
designing practical algorithms using unbiased random bits.

In 1976, Knuth and Yao~\cite{KnYa76} provided a rigorous theoretical
framework, which described generic optimal algorithms able to simulate any
distribution. These algorithms were generally not practically usable:
their description was made as an infinite tree---infinite not only in the
sense that the algorithm terminates with probability $1$ (an unavoidable
fact for any probability that does not have a finite binary expansion),
but also in the sense that the description of the tree is infinite and
requires an infinite precision arithmetic to calculate the binary
expansion of the probabilities.

In 1997, Han and Hoshi~\cite{HaHo97} provided the \emph{interval
  algorithm}, which can be seen as both a generalization and
implementation of Knuth and Yao's model. Using a random bit stream, this
algorithm amounts to simulating a probability $p$ by doing a binary search
in the unit interval: splitting the main interval into two equal
subintervals and recurse into the subinterval which contains $p$. This
approach naturally extends to splitting the interval in more than two
subintervals, not necessarily equal. Unlike Knuth and Yao's model, the
interval algorithm is a concrete algorithm which can be readily
programmed... as long as you have access to arbitrary precision arithmetic
(since the interval can be split to arbitrarily small sizes).

In 2003, Uyematsu and Li~\cite{UyLi03} gave implementations of the
interval algorithm which use a fixed precision integer arithmetic, but
these algorithms approximate the distributions (with an accuracy that is
exponentially better as the size of the words with which the algorithm
gets to work is increased) even in simple cases, such as the discrete
uniform distribution which they use as an illustrating example using
$n=3$.

I was introduced to this problematic through the work of Flajolet,
Pelletier and Soria~\cite{FlPeSo11} on \emph{Buffon machines}, which are
a framework of probabilistic algorithms allowing to simulate a wide range
of probabilities using only a source of random bits.

\paragraph{Discrete uniform distribution.}

Beyond these generic approaches, there has been much interest specifically
in the design of efficient algorithms to sample from the discrete uniform
distribution. While it is the building brick to many other more
complicated algorithms, it is also notable for being extremely common in
various types of simulations in physics, chemistry, etc.

Wu~\etal~\cite{WuHuOu02} were among the first to concretely consider the
goal of saving bits as much as possibly: they note that in practice small
range uniform variables are often used, and thus slice a 32-bit
pseudo-random integer into many smaller integers which they then reject.
Although they do this using a complicated scheme of Boolean functions, the
advantages are presumably that the operations being on a purely bitwise
level, they may be done by hardware implementations and in parallel.

Orlov~\cite{Orlov09} gives an algorithm which reduces the amount of
rejection per call, and assembles a nice algorithm using efficient
bit-level tricks to avoid costly divisions/modulo as much as possible; yet
his algorithm still consumes randomness as 32-bit integers, and is
wasteful for small values (which are typically the most common in
simulations).

Finally Ladd~\cite{Ladd09} considers the problem of drawing from specific
(concentrated) distributions for statistical physics simulations; he does
this by first defining a lookup table, and then uniformly drawing indexes
in this table. His paper is notable to us because it is written with the
intent of making as efficient a use of random bits as possible, and
because he provides concrete implementations of his algorithms. However,
as he draw discrete uniform variables simply by truncating 32-bit
integers, his issue remains the same: unless his lookup table has a size
which is a power of two, he must contend with costly rejection which
increases running time, in his simulations, more than fourfold (see his
compared results for a distribution with 8 states, and with 6 states).

\paragraph{Our contribution.}

Our main algorithm allows for the exact sampling of discrete uniform
variables using an optimal number of random bits for any range $n$.

It is an extremely efficient implementation of Knuth and Yao's general
framework for the special case of the discrete uniform distribution:
conceptually simple, requiring only $2\log n$ bits of storage to draw
a random discrete uniform variable of range $n$, it is also practically
efficient to the extent that it generally improves or matches previous
approaches. A full implementation in C/C++ is provided as illustration at
the end of the article, in \ifdefined\isFDRarticle %
Appendix~\ref{sec:fdr-imp}.\else %
Section~\ref{sec:fdr-imp}.\fi %

Using the Mellin transform we precisely quantify the expected number of
bits that are used, and exhibit the small fluctuations inherent in the
base conversion problem. As expected, the average number of bits used is
slightly less good than the information-theoretic optimality---drawing
a discrete uniform variable comes with a small toll---and so we show how
using a simple (known) encoding scheme we can quickly reach this
information-theoretic optimality. Finally, using a similar method, we
provide likewise optimal sampling of random permutations.

%%%%%%%%%%%%%%%%%%%%%%%%%%%%%%%%%%%%
% SECTION 1 %%%%%%%%%%%%%%%%%%%%%%%%

\section{The {\algFDR} algorithm\label{sec:alg}}

The {\algFDR} algorithm, hereafter abbreviated {\algFDRs}, is very simple,
and can be easily implemented in a variety of languages (taking care to
use the \emph{shifting} operation to implement multiplication by $2$). It
takes as input a fixed integer value of $n$, and returns as output
a uniformly chosen integer from $\set{0, \ldots, n-1}$. The $\flip$
instruction does an unbiased coin flip, that is it returns $0$ or $1$ with
equiprobability. Both this instruction (as a buffer for a PRNG which
generates 32-bit integers) and the full algorithm are implemented in C/C++
at the end of the article, in \ifdefined\isFDRarticle %
Appendix~\ref{sec:fdr-imp}.\else %
Section~\ref{sec:fdr-imp}.\fi %

\begin{theorem}
  The {\algFDR} algorithm described below returns an integer which is
  uniformly drawn from the set $\set{0, \ldots, n-1}$ and terminates with
  probability $1$.
\end{theorem}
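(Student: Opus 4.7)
The plan is to exhibit a loop invariant that pins down the conditional distribution of the running ``accumulator'' at the top of every iteration. Writing $v$ for the variable tracking the current effective range and $c$ for the partially-constructed output, the invariant I will propose is: conditionally on the algorithm not having returned yet, $c$ is uniformly distributed on $\{0, 1, \ldots, v-1\}$. This is trivially true at initialization (when $v=1$ and $c=0$), so the whole argument reduces to showing that the invariant is preserved under the two kinds of updates that the algorithm performs, and then using it to conclude both correctness and termination.

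The first update is the bit-reading step: a fresh \flip{} that produces $b$ replaces $(v, c)$ by $(2v, 2c+b)$. Assuming $c$ was uniform on $\{0, \ldots, v-1\}$ and independent of $b$, a direct computation shows $2c+b$ is uniform on $\{0, \ldots, 2v-1\}$, since every $k$ in that range decomposes uniquely as $k = 2q + r$ with $q \in \{0, \ldots, v-1\}$ and $r \in \{0,1\}$, each with probability $(1/v)(1/2)$. The second update kicks in once $v \geq n$. If $c < n$, the algorithm returns $c$, and conditioning the uniform distribution on $\{0, \ldots, v-1\}$ on the event $\{c<n\}$ yields exactly a uniform on $\{0, \ldots, n-1\}$, which is the claimed correctness. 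Otherwise, the algorithm rescales to $(v-n, c-n)$; conditioning on $\{c \geq n\}$ makes $c-n$ uniform on $\{0, \ldots, v-n-1\}$, so the invariant is restored and no bits have been wasted.

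For termination, the key observation is that whenever the test $v \geq n$ is performed, one in fact has $n \leq v < 2n$: either $v$ has just been doubled from a value $<n$, or it arises from the previous step via $v \mapsto v-n$ with $v<2n$, so $v-n<n$, and then doubling brings it back into $[n, 2n)$. Under the invariant, the probability of returning at a given comparison is therefore $n/v > 1/2$, and the decisions at successive comparisons depend on disjoint blocks of fresh \flip{} outcomes, so they are independent. Consequently, termination occurs with probability $1$ and, as a byproduct, in a geometrically bounded number of comparison rounds.

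The main obstacle is the rescaling step: one must check that rejecting ``the top part'' of the uniform distribution and shifting it down genuinely preserves the invariant, since this is what allows the algorithm to recycle the information already contained in $c$ instead of restarting from scratch as a naive rejection sampler would. Once that step is verified, the rest is bookkeeping: the inductive propagation through \flip{} and the geometric termination estimate are both routine.
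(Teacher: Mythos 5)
Your proposal is correct and follows essentially the same route as the paper: the loop invariant that $c$ is uniform on $\{0,\ldots,v-1\}$, its preservation under both the doubling step and the shift-by-$n$ step, and a geometric termination bound from the fact that each passing comparison returns with conditional probability $n/v > 1/2$. The only cosmetic difference is that the paper derives the geometric bound by comparison with a restarting rejection sampler, whereas you argue directly from the invariant (your aside about independence of successive comparisons is not literally needed --- the conditional success probability $>1/2$ at each comparison already suffices); this does not affect correctness.
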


\begin{center}\begin{algowithlines}
  \Function{FastDiceRoller}{$n$}
  \State$v\gets 1;\;\;c\gets 0$
  \Loop
    \State $v \gets 2v$
    \State $c \gets 2c + \flip$
    \If{$v \geqslant n$}
      \If{$c<n$}
        \State\Return $c$
      \Else
        \State$v \gets v - n$
        \State$c \gets c - n$
      \EndIf
    \EndIf
  \EndLoop
  \EndFunction
\end{algowithlines}\end{center}

\begin{proof}
  Consider this statement, which is a loop invariant: $c$ is uniformly
  distributed over $\set{0,\ldots,v-1}$. Indeed, it is trivially true at
  initialization, and:
  \begin{itemize}
  \item in lines 4 and 5, the range $v$ is doubled; but $c$ is doubled as
    well and added a parity bit to be uniform within the enlarged new range;
  \item lines 10 and 11 are reached only if $c\geqslant n$; conditioned on
    this, $c$ is thus uniform within $\set{n, \ldots, v-1}$, with this
    range containing at least one integer since we also have $v >
    c \geqslant n$; as such we are simply shifting this range when
    substracting $n$ from both $c$ and $v$.
  \end{itemize}
  The correctness of the algorithm follows from this loop invariant. As
  $c$ is always uniformly distributed, when the algorithm returns $c$ upon
  verifying that $v\geqslant n$ (the current range of $c$ is at least $n$)
  and $c < n$ (the actual value of $c$ is within the range we are
  interested in), it returns a uniform integers uniform in $\set{0,\ldots,
    n-1}$.

  The termination and exponential tails can be proved by showing that an
  equivalent, less efficient algorithm is geometrically distributed: in
  this equivalent algorithm, instead of taking care to recycle random bits
  when the condition on line 7 fails, we simply restart the algorithm; by
  doing so, we have an algorithm that has probability
  $p=n/\cramped{2^{\lfloor\cramped{\log_2 n}\rfloor + 1}}> 1/2$ of
  termination every $\lfloor\cramped{\log_2 n}\rfloor + 1$ iterations.
\end{proof}

The space complexity is straightforward: by construction $c < v$ and $v <
2n$ are always true; thus $c$ and $v$ each require $1+\cramped{\log_2 n}$
bits. The time complexity, which also happens to be the random bit
complexity, since exactly one random bit is used per iteration, is more
complicated and detailed in the following section.

\begin{remark*}
  Most random generation packages do not come with a flip or random
  boolean operation (and those which do provide such commodity usually do
  so in a grossly inefficient way). Thus a concrete way of consuming
  random bits is to hold 32-bit random numbers in a temporary buffer
  variable and use each bit one after the other. What is surprising is
  that the overhead this introduces is more than compensated by the
  savings it brings in random bits---which are costly to generate.
\end{remark*}

\begin{remark*}
  It should be noted that this algorithm can be straightforwardly
  adapted to the problem of simulating a Bernoulli law of rational
  parameter $p$, as illustrated in Appendix~\ref{sec:fdr-ratbern}.
\end{remark*}

%%%%%%%%%%%%%%%%%%%%%%%%%%%%%%%%%%%%%%%%%%%%%%%%%%%%%%%%%%
\pagebreak[1]
\section{Analysis of the expected cost in random bits}

\begin{figure}
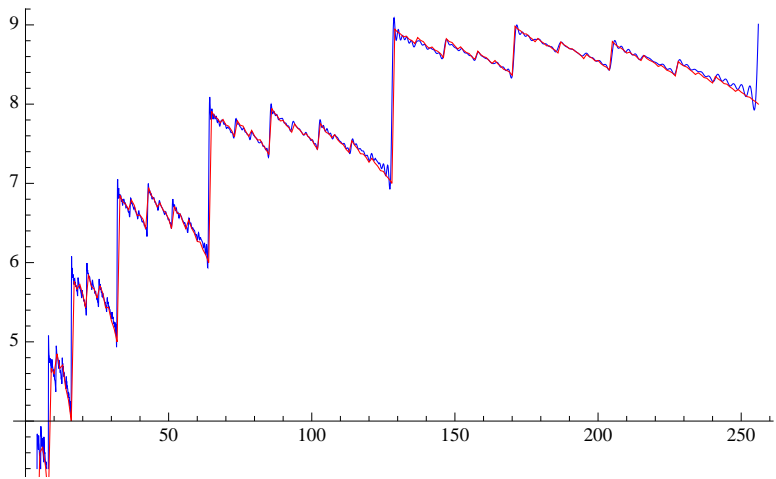

  \centering
  \includeImage{du-touched-up-mellin1.pdf}
  \caption{\label{fig:unifcost-mellin}Plot of the expected cost of
    generating a random discrete uniform variable, where the $x$-axis is
    the (discrete) range $n$ of the variable: the red curve is computed
    from the exact sum, the blue curve is computed from the asymptotic
    expression of Theorem~\ref{thm:unifcost} (using only a dozen roots in
    the trigonometric polynomial $P$).}
\end{figure}

\begin{theorem}\label{thm:unifcost}
  The expected number $u_n$ of random bits needed to randomly draw
  a \emph{uniform} integer from a range of length $n$ using the {\algFDRs}
  algorithm is asymptotically
  \begin{align*}
    u_n = \log_2 n + \frac 1 2 + \frac 1 {\log 2} - \frac
    {\gamma}{\log 2} + P(\cramped{\log_2} n) + O(\cramped{n^{-\alpha}})
  \end{align*}
  for any $\alpha > 0$, and where $P$ is a periodic function,
  a trigonometrical polynomial defined in Equation~\eqref{eq:poly-p}.
\end{theorem}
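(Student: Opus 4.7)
The plan is to express $u_n$ as an explicit harmonic sum and then extract its asymptotics via the Mellin-transform toolkit.

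First, I unroll the algorithm using the loop invariant established in the preceding theorem. Conditioned on non-termination, the range value $v$ at the end of iteration $k$ is deterministic: $v_0 = 1$, and $v_{k+1}$ equals $2v_k$ when $2v_k < n$, and $2v_k - n$ when $2v_k \geq n$. A straightforward induction identifies $v_k$ with the representative of $2^k \bmod n$ in $\{0,1,\ldots,n-1\}$. The conditional probability of \emph{not} terminating in iteration $k$, whether or not a check is triggered, equals $v_k/(2 v_{k-1})$. The product telescopes, giving the survival probability $S_k = v_k/2^k$; since one bit is consumed per iteration,
\[
  u_n \;=\; \sum_{k \geq 0} S_k \;=\; \sum_{k \geq 0} \frac{2^k \bmod n}{2^k} \qquad (n \geq 2).
\]

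I then recognize this as a harmonic sum. Setting $h(x) := x\{1/x\}$, the identity $(2^k \bmod n)/2^k = h(n/2^k)$ gives $u_n = \sum_{k \geq 0} h(n/2^k)$. Computing the Mellin transform of $h$ by the change of variables $u = 1/x$, the integral reduces to $\int_0^\infty \{u\} u^{-s-2}\,du$, which evaluates via the standard $\zeta$-identity to
\[
  H(s) \;=\; -\frac{\zeta(s+1)}{s+1}, \qquad -1 < \Re(s) < 0.
\]
The Dirichlet series attached to the amplitudes is $\sum_{k \geq 0} 2^{ks} = 1/(1-2^s)$, convergent for $\Re(s) < 0$, and the harmonic-sum formula then yields
\[
  u_n^{\star}(s) \;=\; -\frac{\zeta(s+1)}{(s+1)(1-2^s)}
\]
on the strip $-1 < \Re(s) < 0$.

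The asymptotic expansion follows from shifting the inverse-Mellin contour from this fundamental strip rightward to $\Re(s) = \alpha > 0$, picking up contributions from the poles in between. These are a \emph{double} pole at $s = 0$ (the simple pole of $\zeta(s+1)$ combining with the simple zero of $1-2^s$) and simple poles at $s = \chi_k := 2\pi i k/\log 2$ for $k \neq 0$ (the other zeros of $1-2^s$ on the imaginary axis). The Laurent expansion at $s = 0$, using $\zeta(s+1) = 1/s + \gamma + O(s)$, $1/(s+1) = 1 - s + O(s^2)$, and $1/(1-2^s) = -1/(s\log 2) + 1/2 + O(s)$, combined with $n^{-s} = 1 - s\log n + O(s^2)$, produces the main terms $\log_2 n + \tfrac12 + 1/\log 2 - \gamma/\log 2$. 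Each simple pole at $\chi_k$ contributes a residue proportional to $\zeta(\chi_k+1)/(\chi_k+1) \cdot e^{-2\pi i k \log_2 n}$, and these assemble into the Fourier series of the $1$-periodic function $P(\log_2 n)$ of Equation~\eqref{eq:poly-p}. The remainder integral on the shifted contour yields the $O(n^{-\alpha})$ error.

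The main obstacles are (i) correctly bookkeeping the Laurent expansion at the double pole so that the constants $1/2$, $1/\log 2$, and $-\gamma/\log 2$ emerge with the right signs, and (ii) justifying the contour shift via standard polynomial growth bounds (Lindelöf-type) on $\zeta(s+1)/(1-2^s)$ on horizontal lines in the strip.
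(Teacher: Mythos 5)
Your argument is correct, and it reaches the paper's exact cost formula by a genuinely different route. The paper obtains the sum $u_n = n\sum_{k\geqslant 0}\{2^k/n\}2^{-k}$ by invoking Knuth and Yao's optimality theorem for DDG-trees and then arguing (somewhat informally) that the {\algFDRs} algorithm realizes such a tree; you instead derive the equivalent sum $u_n=\sum_{k\geqslant 0}(2^k \bmod n)/2^k$ directly from the algorithm, by observing that the range variable evolves deterministically as $v_k = 2^k \bmod n$ and that the conditional survival probability at step $k$ is $v_k/(2v_{k-1})$, which telescopes to $v_k/2^k$. This first-principles computation is self-contained and actually closes the small gap in the paper's reliance on the DDG-tree correspondence, though it forgoes the bonus conclusion that the algorithm is bit-optimal among all methods (which Knuth--Yao's theorem gives for free). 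From that point on the two proofs coincide: your grouping $h(x)=x\{1/x\}$ with $H(s)=-\zeta(s+1)/(s+1)$ and Dirichlet factor $1/(1-2^s)$ yields the same transform $F^\star(s)=-\zeta(s+1)/\bigl((s+1)(1-2^s)\bigr)$ on $-1<\mathrm{Re}(s)<0$ that the paper gets by combining the harmonic-sum rule with the shift $\mathcal{M}[xf(x)]=f^\star(s+1)$, and the residue bookkeeping at the double pole $s=0$ and at $\chi_k=2ik\pi/\log 2$ produces the stated constants and the trigonometric polynomial $P$ exactly as in the paper.
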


\begin{remark*}
  Furthermore, the {\algFDRs} algorithm terminates with probability 1, and
  as proven in the previous section, the distribution of its running time
  (and number of random bits) has \emph{exponential tails}\footnote{A
    random variable $X$ is said to have exponential tails if there are
    constants $C$ and $\rho < 1$ such that $\prob{X > k} \leqslant
    C \rho^k$.}.
\end{remark*}

\noindent The remainder of this section is dedicated to proving this
theorem. First, we will revisit some of Knuth and Yao's main results, and
by showing the {\algFDRs} algorithm is an implementation of their
theoretical framework to the special case of the discrete uniform
distribution, we obtain an expression of its expected cost in random bits
as an infinite sum. Then, we use a classical tool from analysis of
algorithms, the Mellin transform, to obtain the sharp asymptotic estimate
stated in the theorem.

\subsection{\label{subsec:ky-ddg}Knuth and Yao's optimal DDG-trees}

As mentioned, Knuth and Yao introduce a class of algorithms, called
DDG-trees, to optimally simulate discrete distributions using only random
bits as a randomness source.

The discrete distributions are defined by a probability vector
$\mathbd{p}=(\cramped{p_1,\ldots,p_n})$, which is possibly infinite.
A DDG-tree is a tree (also possibly infinite) where: internal nodes
indicate coin flips; external nodes indicate outcomes; at depth $k$ there
is an external node labeled with outcome $i$ if and only if the $k$-th bit
in the binary expansion of $\cramped{p_i}$ is $1$.

Knuth and Yao do not provide an efficient way to build, or simulate,
these DDG-trees---and this is far from a trivial matter. But one of
their main results~\cite[Theorem 2.1]{KnYa76} is that DDG-trees
provide simulations which are optimal in number of random bits used,
with an average complexity which, if finite, is
\begin{align}
  \nu(\mathbd{p}) = \nu(\cramped{p_1}) + \ldots + \nu(\cramped{p_n})
\end{align}
where $\nu$ is a function defined by
\begin{align}\label{eq:ky-nu}
  \nu(x) = \sum_{k=0}^\infty \frac{\left\{2^k\,x\right\}}{2^k}\text{.}
\end{align}
where $x\mapsto \{x\}$ denotes the fractional part function.
A straightforward consequence is that the optimal average random-bit
complexity to simulate the discrete uniform distribution, that is where
$\mathbd{p} = (1/n,\ldots,1/n)$, is
\begin{align}\label{eq:ky-u-sum}
  u_n = n\sum_{k=0}^\infty \left\{\frac{2^k}{n}\right\}\frac{1}{2^k}\text{.}
\end{align}

\subsection{The {\algFDRs} algorithm, as an implementation of a DDG-tree}

This sum is the exact complexity of the algorithm presented in
Section~\ref{sec:alg}; this is best understood by noticing that the
{\algFDRs} algorithm is an implementation of a DDG-tree. The algorithm
efficiently computes the binary expansion of $1/n$: every iteration
computes a single bit, and those iteration where the condition line~6 is
verified are those where this bit is equal to $1$, and where, according to
Knuth and Yao's framework, the DDG-tree should have $n$ terminal leaves.
The variable $c$ simulates a path within the tree, from the root to one of
the leaves.

\begin{figure}[h]
  \centering
  \begin{minipage}{6cm}
    \def\tn#1{{\bf #1}}
    \begin{tikzpicture}
      \Tree [.0 
              [.0 
                [.0
                  [.{\tn 0} ]
                  [.{\tn 1} ] ]
                [.1 
                  [.{\tn 2} ]
                  [.{\tn 3} ] ] ]
              [.1
                [.2 
                  [.{\tn 4} ]
                  [.0 
                    [.{\tn 0} ]
                    [.{\tn 1} ] ] ]
                [.3 
                  [.1
                    [.{\tn 2} ]
                    [.{\tn 3} ] ]
                  [.2
                    [.{\tn 4} ]
                    [.0 \edge[roof]; {... } ] ] ] ] ]
    \end{tikzpicture}
  \end{minipage}
  \caption{\label{fig:fdr-as-ddg}Tree of the branching process of the
    {\algFDRs} algorithm for the case where $n=5$; a single process is
    a path from the root to a leaf, where each internal node corresponds
    to a new random bit being used, and the leaves correspond to the
    outcome. This tree is exactly a DDG-tree for simulating the uniform
    distribution of rang $n=5$. Indeed, note that the binary expansion of
    $1/5$ is periodic and goes $1/5=0.001100110011\ldots_2$, which
    corresponds with the alternance of levels with and without leaves in
    the corresponding tree.}
\end{figure}

\subsection{The Mellin transform of a base function\label{sub:mellin1}}

The Mellin transform is a technique to obtain the asymptotics of some
types of oscillating functions. It is of central importance to the
analysis of algorithms because such oscillating functions appear naturally
(for instance, in most all analyses having to do with divide-and-conquer
type recursions for instance), and their periodic behavior can generally
not be quantified with other usual, and less precise asymptotic techniques

\begin{definition}
  Let $f$ be a locally Lebesgue-integrable function over $(0,+\infty)$.
  The \emph{Mellin transform} of $f$ is defined by as the complex
  function, $s\in\mathbb{C}$,
  \begin{align*}
    \Mellin{f(x)} = f^\star(s) = \int_0^{+\infty} f(x) x^{s-1}\drm
    x\text{.}
  \end{align*}
  The largest open strip $\alpha < \mathrm{Re}(s) < \beta$ in which the
  integral converges is called the \emph{fundamental strip}. We may note
  this strip \MellinStrip. % $<\!\!\alpha,\beta\!\!>$.
\end{definition}

\paragraph{Important properties.}

Let $f$ be a real function; the following $F$ is called a \emph{harmonic
  sum} as it represents the linear superposition of ``harmonics'' of the
base function $f$,
\begin{align}
  F(x) = \sum_k \lambda_k f(\cramped{\mu_k} x)
\end{align}
and the $\cramped{\lambda_k}$ are called the \emph{amplitudes}, the
$\cramped{\mu_k}$ the \emph{frequencies}~\cite{FlGoDu95}. Most functions like
$F$ usually involve subtle \emph{fluctuations} which preclude the use of
real asymptotic techniques.

The first important property we will make use of is that the Mellin
transform allows us to \emph{separate} the behavior of the base function
from that of its harmonics. Indeed, if $f^\star(s)$ is the Mellin
transform of $f$, the Mellin transform of the harmonic sum $F$ involving
$f$ is then simply
\begin{align}\label{eq:mellin-harm}
  F^\star(s) = \left(\sum_k\frac {\lambda_k}{{\mu_k}^s}\right) \cdot
    f^\star(s)\text{.}
\end{align}
The second property that is central to the analyses in this chapter is
that the behavior of a function $f$ in $0$ and in $+\infty$ can be
directly respectively read on the poles to the left or right of the Mellin
transform $\cramped{f^\star}$.

\subsection{Mellin transform of the fractional part}

Before proceeding to the Mellin transform of the harmonic sum which we are
interested in, using the principles described by
Flajolet~\etal~\cite{FlGoDu95}, we must manually calculate the Mellin
transform of the fractional part function using classical integration
tools---thus giving another proof of a result which seems to be due to
Titchmarsh~{\cite[\S 2]{Titchmarsh86}}.

\begin{lemma}\label{lemma:mellin-frac}
  Let $f(x)=\left\{1/x\right\}$ be the fractional part of the inverse
  function, its Mellin transform, valid for $0 < \RealPart{s} < 1$, is
  \begin{align*}
    f^\star(s) = -\frac {\zeta(s)} s\text{.}
  \end{align*}
\end{lemma}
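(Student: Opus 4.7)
The plan is to split the defining integral at $x=1$ and identify each piece with a standard expression involving $\zeta(s)$. Writing
\[
f^\star(s) = \int_0^1 \{1/x\}\,x^{s-1}\,dx + \int_1^\infty \{1/x\}\,x^{s-1}\,dx,
\]
the tail is elementary: on $(1,\infty)$ one has $1/x\in(0,1)$, hence $\{1/x\}=1/x$, and the tail reduces to $\int_1^\infty x^{s-2}\,dx = 1/(1-s)$, convergent for $\mathrm{Re}(s)<1$.

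For the head piece, I would perform the change of variables $u=1/x$ (with $dx=-du/u^2$ and $x^{s-1}=u^{1-s}$), which turns $\int_0^1\{1/x\}\,x^{s-1}\,dx$ into $\int_1^\infty\{u\}\,u^{-s-1}\,du$. This last integral is exactly the classical Euler--Maclaurin representation
\[
\zeta(s) = \frac{s}{s-1} - s\int_1^\infty \{u\}\,u^{-s-1}\,du,
\]
valid for $\mathrm{Re}(s)>0$, $s\neq 1$, and readily obtained by Abel summation applied to $\sum_{n\geq 1} n^{-s}$. Rearranging gives the head integral as $1/(s-1)-\zeta(s)/s$.

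Summing the two contributions, the simple poles $1/(1-s)$ and $1/(s-1)$ cancel exactly, leaving $f^\star(s) = -\zeta(s)/s$. The strip of validity $0<\mathrm{Re}(s)<1$ follows from a direct convergence analysis of the original integral: near $x=0^+$, $\{1/x\}$ is bounded but oscillating, so convergence requires $\mathrm{Re}(s)>0$; near $x\to+\infty$, the asymptotic $\{1/x\}\sim 1/x$ forces $\mathrm{Re}(s)<1$. There is no serious obstacle here: the only nontrivial ingredient is the Euler--Maclaurin formula for $\zeta$, and the only care needed is in tracking signs and the Jacobian in the substitution $u=1/x$, together with noticing that the two simple poles in the two pieces conveniently annihilate each other.
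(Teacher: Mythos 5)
Your proof is correct, and it reaches the result by a slightly different route than the paper for the head piece $\int_0^1\{1/x\}x^{s-1}\,dx$. The paper stays in the variable $x$, partitions $(0,1)$ into the intervals $[1/(n+1),1/n]$ on which $\{1/x\}=1/x-n$, evaluates the resulting sums as shifted zeta functions, and then must appeal to analytic continuation to cancel the two integrals $\int_0^1 x^{s-2}\,dx$ and $\int_1^\infty x^{s-2}\,dx$, which never converge simultaneously. You instead substitute $u=1/x$ and invoke the classical representation
\begin{align*}
\zeta(s)=\frac{s}{s-1}-s\int_1^\infty \{u\}\,u^{-s-1}\,du\qquad(\mathrm{Re}(s)>0),
\end{align*}
so that the pole cancellation $1/(1-s)+1/(s-1)$ happens between two expressions each of which is legitimately defined on the whole strip $0<\mathrm{Re}(s)<1$; this sidesteps the paper's analytic-continuation caveat and is arguably cleaner, at the cost of importing the Euler--Maclaurin/Abel-summation formula as a known fact (which is essentially the same interval-by-interval computation the paper carries out by hand, just performed once and for all for $\zeta$). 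Both arguments handle the tail identically and read off the fundamental strip from the behaviour of $\{1/x\}$ at $0$ and $+\infty$ in the same way, so the substance is the same; your version is a legitimate and somewhat tidier packaging of it.
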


\begin{proof}
  From the observation that, for $x>1$, $f(x) = 1/x$, we may split the
  integral,
  \begin{align*}
     \int_0^\infty \left\{\frac 1 x\right\} x^{s-1} \drm x =
     \int_0^1 \left\{\frac 1 x\right\} x^{s-1} \drm x +
     \int_1^\infty x^{s-2} \drm x\text{.}
  \end{align*}
  To integrate on the unit interval, we split according to the inverses of
  integers,
  \begin{align*}
    \sum_{n=1}^{\infty} \int_{\frac 1 {n+1}}^{\frac{1}{n}} \left\{\frac
      1 x\right\} x^{s-1} \drm x &=
    \int_0^1 x^{s-2}\drm x -
    \sum_{n=1}^\infty n \int_{\frac 1 {n+1}}^{\frac{1}{n}} x^{s-1} \drm x
  \end{align*}
  and furthermore
  \begin{align*}
    -\sum_{n=1}^\infty n \int_{\frac 1 {n+1}}^{\frac{1}{n}} x^{s-1} \drm
    x = -\frac 1 s \sum_{n=1}^\infty \frac 1 {n^{s-1}} +\frac
    1 s \sum_{n=1}^\infty \frac 1 {(n+1)^{s-1}} -\frac
    1 s \sum_{n=1}^\infty \frac 1 {(n+1)^{s}}\text{.}
  \end{align*}
  Each sum can be replaced by properly shifted Riemann's zeta function,
  \begin{align*}
    -\sum_{n=1}^\infty n \int_{\frac 1 {n+1}}^{\frac{1}{n}} x^{s-1} \drm x =
    -\frac{\zeta(s)}{s}\text{.}
  \end{align*}
  By analytic continuation, the two integrals of $\cramped{x^{s-2}}$ are
  valid even outside of their initial domain of definition. They cancel
  each other out, and we are left with
  \begin{align*}
    f^\star(s) := \int_0^\infty \left\{\frac 1 x\right\} x^{s-1} \drm x =
    -\frac{\zeta(s)}{s}\text{.}
  \end{align*}
  Finally, the fundamental strip in which this Mellin transform is defined
  can be found by observing that
  \begin{align*}
    \lim\limits_{x\to 0} \left\{\frac 1 x\right\} = O(1) = O(\cramped{x^0}) %
    \quad\text{and}\quad %
    \forall x > 1, \left\{\frac 1 x\right\} = \frac 1 x = O(\cramped{x^{-1}})\text{.}
  \end{align*}
\end{proof}

\begin{remark*}
  Observe that we calculate the Mellin transform of $\left\{1/x\right\}$,
  but this also provides the Mellin transform of $\left\{x\right\}$.
  Indeed this follows the following functional property
  \begin{align*}
    \Mellin{f(x)} = f^\star(s) %
    \qquad\Leftrightarrow\qquad %
    \Mellin{f(1/x)} = -f^\star(-s)
  \end{align*}
  respectively on the fundamental strips $\MellinStrip[\alpha,\beta]$ and
  $\MellinStrip[-\beta, -\alpha]$, which is a special case of a more
  general rule expressing the Mellin transform $f(x^\eta)$, see for
  instance~\cite[Theorem~1]{FlGoDu95}.
\end{remark*}

\subsection{Mellin transform of the discrete uniform average complexity}

We now have all the tools to study the harmonic sum we are interested,
\begin{align}
  u_n = n\sum_{k=0}^\infty \left\{\frac{2^k}{n}\right\}\frac{1}{2^k}\text{.}
\end{align}
Our first step is to transform this into a real function (replace the
discrete variable~$n$ by a real variable~$x$) and decompose this as a base
function and harmonics,
\begin{align}
  F(x) := x \sum_{k=0}^\infty \left\{\frac{2^k}{x}\right\}\frac{1}{2^k} =
  x \sum_{k=0} f(\cramped{2^{-k}}\, x)\, 2^{-k}\quad\text{with}\quad
  f(x) = \left\{\frac{1}{x}\right\}\text{.}
\end{align}
We now use: the functional property we have recalled in
Equation~\eqref{eq:mellin-harm}; the additional
property~\cite[Fig.~1]{FlGoDu95} the Mellin transform of $xf(x)$ is
\begin{align}
  \Mellin{xf(x)}=f^\star(x+1)
\end{align}
on the shifted fundamental strip $\MellinStrip[\alpha-1, \beta-1]$; and
the Mellin transform of the fractional part, as stated in
Lemma~\ref{lemma:mellin-frac}. With these, we finally obtain that
\begin{align}
  F^\star(s) = - \frac{\zeta(s+1)}{(1-\cramped{2^s})(s+1)}\text{.}
\end{align}
This Mellin transform is defined on the fundamental strip $-1
<\mathrm{Re}(s)<0$, and it has one double pole in $s=0$ (from the
$\zeta(s+1)$ and the denominator $1-\cramped{2^s}$ which cancels out)
which will induce a logarithmic factor, and an infinity of simple complex
poles in $s=2\cplxI k \pi/\log 2$ from which will come the fluctuations.

Indeed, using the Mellin summation formula~\cite[p.~27]{FlGoDu95}, we
obtain the asymptotic expansion
\begin{align}
  F(x) \msim -\sum_{s\in \Omega} \mathrm{Res}(\cramped{F^\star(s)x^{-s}})
\end{align}
where $\Omega$ is the set of poles to the right of the fundamental strip,
which we have just enumerated. Thus we get
\begin{align*}
  F(x) \msim \log_2 x + \frac 1 2 + \frac 1 {\log 2} - \frac{\gamma}{\log 2}  +
  P(\cramped{\log_2} x) + O(\cramped{n^{-\alpha}})\text{,}
\end{align*}
for any arbitrary $\alpha > 0$, and with $P$ a trigonometric polynomial
defined as:
\begin{align}\label{eq:poly-p}
  P(\cramped{\log_2} x) := - \frac 1 {\log 2}
  \sum_{k\in\Z\setminus\set{0}} \frac {\zeta(2\cplxI k \pi/\log 2 +
    1)}{2\cplxI k \pi/\log 2 + 1} \lgexp{-2\cplxI k \pi \log_2 x}\text{.}
\end{align}

%\subsection{Upperbounding of the ``toll''}

%%% \begin{remark*}
%%%   In the expression of the average random-bit cost, we can distinguish two
%%%   parts,
%%%   \begin{align*}
%%%     \log_2 n \qquad\text{and}\qquad %
%%%     t_n = \frac 1 2 + \frac 1 {\log 2} - \frac{\gamma}{\log 2} +
%%%     P(\cramped{\log_2} n) + O(\cramped{n^{-\alpha}})\text{.}
%%%   \end{align*}
%%%   On one side, the expected $\cramped{\log_2} n$ contribution that comes
%%%   from encoding $n$ in binary base; on the other, some additional
%%%   \emph{toll} that comes from the fact that the binary base development of
%%%   some probabilities is infinite. While there is no convenient
%%%   (analytical) way to obtain a precise bound on the trigonometric
%%%   polynomial $P(x)$, Knuth and Yao's article~\cite[Theorem~2.2]{KnYa76}
%%%   provided the precise bound
%%%   \begin{align}\label{eq:ky-toll-bound}
%%%     0 \leqslant t_n \leqslant 2\text{.}
%%%   \end{align}
%%%   which combined with an estimate of the constant part of the toll,
%%%   \begin{align*}
%%%     \frac 1 2 + \frac 1 {\log 2} - \frac{\gamma}{\log 2} = 1.10995\ldots
%%%   \end{align*}
%%%   gives a good intuition of how the toll evolves for various values of $n$.
%%% \end{remark*}
%%% 
%%%%%%%%%%%%%%%%%%%%%%%%%%%%%%%
\section{Algorithmic tricks to attain entropic optimality}

In the expression of the average random-bit cost, we can distinguish two
parts,
\begin{align*}
  \log_2 n \qquad\text{and}\qquad %
  t_n = \frac 1 2 + \frac 1 {\log 2} - \frac{\gamma}{\log 2} +
  P(\cramped{\log_2} n) + O(\cramped{n^{-\alpha}})\text{.}
\end{align*}
On one side, the expected $\cramped{\log_2} n$ contribution that comes
from ``encoding'' $n$ in a binary base (using random bits); on the other,
some additional \emph{toll} when $n$ is not a dyadic rational, i.e. $n
\not= \cramped{2^k}$, and the generation process requires rejection.

\begin{figure}[h]
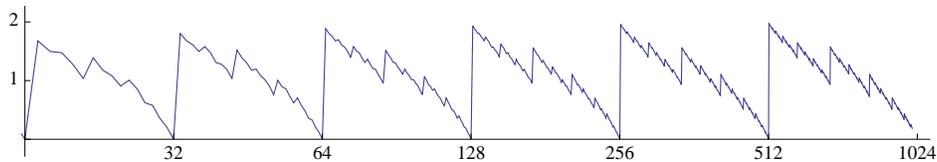

  \centering
  \includeImage{du-toll-plot.pdf}
  \caption{\label{fig:du-toll}Plot of the toll $\cramped{t_n}$ in the
    average random-bit cost of generating a discrete uniform variable,
    with $n$ from 2 to 1024; as expected the plot exhibits a distinct
    logarithmic period. When $1/n$ has a finite binary expansion, i.e. it
    is a \emph{dyadic rational} with $n=\cramped{2^{-k}}$, the toll is
    equal to zero.}
\end{figure}

\noindent In their article, Knuth and Yao~\cite[Theorem~2.2]{KnYa76} prove
that for all discrete distributions (including the one we are interested
in), this toll has the following bounds:
\begin{align}\label{eq:ky-toll-bound}
  0 \leqslant t_n \leqslant 2\text{.}
\end{align}

Because this toll, as exhibited in Figure~\ref{fig:du-toll}, is not
monotonous and upperbounded by a constant, the implication is that it is
generally more efficient (in terms of the proportion of bits which are
wasted in the toll) to generate a discrete uniform of larger range,
because this toll becomes of insignificant magnitude compared to the main
logarithmic factor.

In the remainder of this section, we use this observation to go beyond
theoretic bounds and reach the entropic optimality for the generation of
discrete uniform variables, and random permutations.

\subsection{Batch generation}

As it has been observed many times, for instance by Han and
Hoshi~\cite[V.]{HaHo97}, it can prove more efficient to generate the
Cartesian product of several uniform variables, than generating a single
uniform variable---especially when the considered range is small.

Thus, instead of generating a single discrete uniform variable of
range~$n$, we generate $j$ variables at a time by drawing a discrete
uniform variable $Y$ of range~$n^j$ and we use its decomposition in
$n$-ary base
\begin{align}
  Y := X_{j} \cdot n^{j-1} + \ldots + X_1 \cdot n^0
\end{align}
to recover the $\cramped{X_i}$ through a simple (albeit slightly costly)
succession of integer divisions. As it turns out, this trick decreases the
toll by a more than linear factor, as encapsulated by the following
theorem.

\begin{theorem}\label{thm:manyunifcost}
  The number $u_{n,j}$ of random bits needed to randomly draw a uniform
  integer from a range of length $n$ increases when the random integers
  are drawn $j$ at a time,
  \begin{align*}
    u_{n,j} = \log_2 n + \frac 1 j \left(\frac 1 2 + \frac 1 {\log 2} - \frac
    {\gamma}{\log 2}\right) + \frac 1 {j^2} P(\cramped{\log_2} n) +
    O(\cramped{n^{-\alpha}}/j)
  \end{align*}
  for some any $\alpha > 0$, so that as $j$ tends to infinity, we reach
  asymptotical information theoretic optimality
  \begin{align*}
    u_{n,\infty} \msim \log_2 n\text{.}
  \end{align*}
\end{theorem}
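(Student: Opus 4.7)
The plan is to apply Theorem~\ref{thm:unifcost} to a batched variable and amortize. First I would formalize the batching argument: to obtain $j$ uniform variates of range $n$, draw a single uniform variable $Y$ of range $n^j$ using {\algFDRs}, then recover the variates via the base-$n$ decomposition $Y = X_j n^{j-1} + \dots + X_1 n^0$. Because this decomposition consumes no additional randomness, the total random-bit cost equals exactly $u_{n^j}$, giving an amortized per-variate cost of $u_{n,j} = u_{n^j}/j$.

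Next I would substitute $n \mapsto n^j$ into the asymptotic expansion of Theorem~\ref{thm:unifcost}, using $\log_2(n^j) = j\log_2 n$:
\[
u_{n^j} = j\log_2 n + \frac{1}{2} + \frac{1}{\log 2} - \frac{\gamma}{\log 2} + P(j\log_2 n) + O(n^{-j\alpha}).
\]
Dividing by $j$ immediately produces the main term $\log_2 n$, the constant toll $\frac{1}{j}\left(\frac{1}{2} + \frac{1}{\log 2} - \frac{\gamma}{\log 2}\right)$, a periodic toll of order $O(1/j)$ (since $P$ is bounded), and an error $O(n^{-j\alpha}/j)$ which is absorbed into $O(n^{-\alpha}/j)$ for any fixed $\alpha$. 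Letting $j \to \infty$ annihilates every toll term, yielding $u_{n,\infty} \sim \log_2 n$, the information-theoretic optimum.

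The most delicate part---and where I expect to spend most of the effort---is pinning down the precise structure of the periodic toll in the stated form. One can refine the naive substitution by redoing the Mellin analysis of Section~\ref{sub:mellin1} directly on the harmonic sum $G(x) := F(x^j)/j$ viewed as a function of $x$ in its own right: using the scaling rule for the Mellin transform of $f(x^j)$ (the general form of the identity recalled in the remark after Lemma~\ref{lemma:mellin-frac}), one finds $G^\star(s) = F^\star(s/j)/j^2$, whose complex poles sit at $s = 2ijk\pi/\log 2$. The Mellin summation formula then extracts the periodic contribution via residues, with the extra $1/j^2$ prefactor and the rescaled derivative of $1 - 2^{s/j}$ together controlling its amplitude. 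Since the bound $|P| = O(1)$ already gives the required $O(1/j)$ control on the periodic toll in any case, the asymptotic optimality conclusion is robust to the precise form of this term.
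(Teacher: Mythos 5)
Your proposal follows essentially the same route as the paper: the paper also sets $G(x)=F(x^j)/j$ (its line ``$F(x)=G(\cramped{x^j})/j$'' has the two letters transposed), invokes the Mellin scaling rule to obtain $G^\star(s)=F^\star(s/j)/j^2$, and concludes that the toll vanishes as $j\to\infty$ by combining this with the Knuth--Yao bound of Equation~\eqref{eq:ky-toll-bound}, exactly as you do via the boundedness of $P$. Your explicit reduction $u_{n,j}=u_{n^j}/j$ (the base-$n$ decomposition of $Y$ consumes no additional randomness) is left implicit in the paper, and making it explicit is the right way to justify that $G(n)$ really is the per-variate cost.

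On the point you single out as delicate, your instinct is correct and worth keeping: carrying the residue computation through at the poles $s=2\cplxI jk\pi/\log 2$ of $G^\star$ gives, after the change of variable $w=s/j$, a contribution with amplitude $\frac{1}{j^2}\cdot j=\frac{1}{j}$ times the residue of $F^\star$ at $w=2\cplxI k\pi/\log 2$, i.e.\ the periodic toll $\frac{1}{j}P(j\log_2 n)$ that your direct substitution already produced---amplitudes of order $1/j$ and frequencies dilated by $j$---rather than the $\frac{1}{j^2}P(\cramped{\log_2}n)$ asserted in the statement. The paper's own proof does not derive that form either; it simply writes it down. Since both expressions are $O(1/j)$, the bound $0\leqslant t_{n,j}\leqslant 2/j$ and the conclusion $u_{n,\infty}\msim\log_2 n$ are unaffected, so relying only on $|P|=O(1)$, as you do, is exactly the right level of commitment.
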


\noindent In practice, because of the quadratic rate by which this trick
attenuates the importance of the oscillations, it doesn't take much to get
very close to the information theoretic optimum of $\cramped{\log_2 n}$
bits. As illustrated by Figure~\ref{fig:unifcost-many}, typically taking
$j=6$ is already very good, and a significant improvement over $j=1$.

\begin{figure}
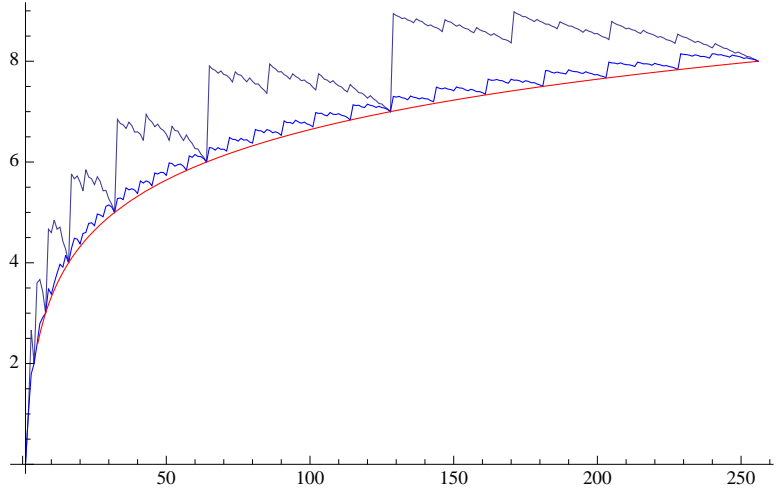

  \centering
  \includeImage{du-compare-many.pdf}
  \caption{\label{fig:unifcost-many}This plot illustrates the quadratic
    decrease of the periodic oscillations, and fast convergence of the
    logarithm, when generating several random discrete uniform variables
    at a time. In dark blue, the expected cost of generating one uniform
    variable; in light blue, the expected cost when generating six at
    a time (i.e., $j=6$); as a point of comparison, in red, the
    information-theoretic optimal given by the binary logarithm.}
\end{figure}

Larger values of $j$ should be disregarded as, considering the size of
words is finite (32, 64, or 128 bits, typically, depending on the computer
architecture), it is of course important to keep in mind that $j$ must be
chosen so as to not cause an overflow.

\begin{proof}
  Using the function $F(x)$ as defined in the proof of
  Theorem~\ref{thm:unifcost}, we define $F(x) = G(\cramped{x^j})/j$.
  Classical rules of the Mellin transform show that
  \begin{align*}
    G^\star(s) = \frac{F^\star(s/j)}{j^2}
  \end{align*}
  valid in the fundamental strip where $-j < \RealPart{s} < 0$. We now can
  define $\cramped{t_{n,j}}$ the unitary toll for each discrete uniform
  variable of range $n$, when they are generated $j$ at a time, as
  \begin{align*}
    t_{n,j} = \frac 1 j \left(\frac 1 2 + \frac 1 {\log 2} - \frac
    {\gamma}{\log 2}\right) + \frac 1 {j^2} P(\cramped{\log_2} n) +
    O(\cramped{n^{-\alpha}}/j)\text{.}
  \end{align*}
  Using Knuth and Yao's bound restated in
  Equation~\eqref{eq:ky-toll-bound}, we have the rough bound
  \begin{align*}
    0 \leqslant t_{n,j} \leqslant \frac 2 j
  \end{align*}
  which is sufficient to show that as $j$ tends to infinity,
  $\cramped{t_{n,j}}$ tends to zero.
\end{proof}

\subsection{Optimally generating random permutations}

Beside (both continuous and discrete) uniform variables, another
elementary building block in random generation is \emph{random
  permutations}. They are routinely used in a great many deal of
applications\footnote{And even so, their generation may still prove
  challenging, as recently evidenced by Microsoft. Indeed, as a concession
  to the European Union (which found Microsoft guilty of imposing its own
  browser to Windows users, to the detriment of the competition),
  Microsoft provided Windows~7 users with a randomly permutated ballot
  screen for users to select a browser. But a programming error made the
  ordering far from ``random'' (uniform), which briefly caused
  a scandal.}.

For example, on a related topic, one such application is to the automatic
combinatorial random generation methods formalized by Flajolet and his
collaborators. In both the recursive method~\cite[\S 3]{FlZiVa94} and
Boltzmann sampling~\cite[\S 4]{DuFlLoSc04}, only the shapes of labeled
objects are sampled: the labeling can then be added after, by drawing
a random permutation of the size of the object.

But random permutations are also useful in other fields. In statistical
physics~\cite[\S 1.2.2]{Krauth06}, for instance in quantum physic
simulations.

It should be noted that the algorithmic ideas presented in this subsection
are classical~\cite[\S XIII]{Devroye86}. Their relevance in the context of
this article is that, in conjunction with the {\algFDRs} algorithm, they
allow to concretely attain previously theoretical-only optimal lower
bounds---while for the most part remaining reasonably efficient.

\paragraph{Asymptotical optimality using the Fisher-Yates shuffle}

\begin{figure}
  \centering
  \begin{algo}[19.8em]
    \Function{Shuffle}{$T$}
    \State $n \gets |T|$
    \For{$i$}{$1$}{$n$}
      \State $k \gets i+\Call{DiscreteUniform}{n-i+1}$
      \State \Call{Swap}{$T$, $i$, $k$}
    \EndFor
    \EndFunction
  \end{algo}
  \caption{The Fisher-Yates random shuffle as described by Durstenfeld.
    The array $T$ is indexed in one, and \textsc{DiscreteUniform}
    concordingly returns a random uniform number in the range $0$ to $n-i+1$
    included.}
\end{figure}

A straightforward idea with the elements thus far presented, is as
follows. Assuming one generates the uniform in batches with $j$ sufficient
so that we may assume that each uniform of range $N$ takes $\log_2 N +
\varepsilon$ bits, with some very small $\varepsilon$, then the random bit
complexity $C$ of generating a permutation with the Fisher-Yates shuffle
is

\begin{align}
  C \msim \log_2 2 + \log_2 3 + \ldots_2 + \log_2 n + \varepsilon n =
  \log_2 n! + \varepsilon n
\end{align}

But unfortunately, even though by generating the separate uniform
variables in large enough batches, we can decrease the toll considerably
for each uniform variable, we will still have an overall linear amount of
such toll when considering the $n$ variables that must be drawn.
Furthermore it is not very practical to have to generate uniform in
batches (this supposes that we are drawing many permutations at the same
time). So we suggest another solution.

\paragraph{Optimal efficiency using a succinct encoding.}

As early on as 1888, Laisant~\cite{Laisant1888} exhibited a more direct
way of generating random permutations, using a mixed radix decomposition
called \emph{factorial base system}.

\begin{lemma}
  Let $U$ be a uniformly drawn integer from $\set{0,\ldots,n!-1}$, and let
  $\cramped{X_n}$ be the sequence such that
  \begin{align*}
    U = X_{n} \cdot (n-1)! + \ldots + X_1\cdot 0!\qquad\text{and}\qquad
    \forall i, 0 \leqslant X_i < i
  \end{align*}
  then the $\cramped{X_i}$ are \emph{independent} uniformly drawn integers
  from $\set{0,\ldots, i-1}$.
\end{lemma}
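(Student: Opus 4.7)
The plan is to prove this by recognizing the factorial base decomposition as a bijection, and then invoking the standard fact that a uniform distribution on a Cartesian product is the product of uniforms on each factor, yielding both the marginal distribution and the independence for free.

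First I would verify that the map
\[
\Phi \colon U \longmapsto (X_1,X_2,\ldots,X_n)
\]
is a well-defined bijection between $\{0,\ldots,n!-1\}$ and the product set $\prod_{i=1}^{n} \{0,\ldots,i-1\}$. Existence and uniqueness of the digits $X_i$ follow from a greedy argument, essentially the standard mixed-radix division: given $U$, set $X_n = \lfloor U / (n-1)! \rfloor$, which lies in $\{0,\ldots,n-1\}$ since $U < n!$, then recurse on $U - X_n(n-1)!$, which is strictly less than $(n-1)!$. Both sets have cardinality $n!$, so $\Phi$ is a bijection.

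Next, because $U$ is uniform on $\{0,\ldots,n!-1\}$ and $\Phi$ is a bijection onto the product space of cardinality $n!$, the pushforward distribution of $(X_1,\ldots,X_n)$ is uniform on the product set $\prod_{i=1}^n \{0,\ldots,i-1\}$. The uniform measure on a finite Cartesian product factors as the product of the uniform measures on each coordinate; this is immediate from
\[
\Prob{X_1 = x_1, \ldots, X_n = x_n} = \frac{1}{n!} = \prod_{i=1}^n \frac{1}{i},
\]
so the joint mass function equals the product of the marginal mass functions $\Prob{X_i = x_i} = 1/i$. This simultaneously establishes that each $X_i$ is uniform on $\{0,\ldots,i-1\}$ and that the $X_i$ are mutually independent.

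There is no real obstacle here: the entire content of the lemma is the classical uniqueness of the factorial-base (Lehmer code) representation combined with the trivial measure-theoretic fact that the uniform on a product factors. If one prefers a more constructive route, the same conclusion follows by induction on $n$, observing that $X_n \equiv U \pmod{n}$ is uniform on $\{0,\ldots,n-1\}$ and that, conditioned on $X_n$, the quantity $(U - X_n(n-1)!)$ is uniform on $\{0,\ldots,(n-1)!-1\}$, so the induction hypothesis applies to the lower-order digits $X_1,\ldots,X_{n-1}$, independently of $X_n$.
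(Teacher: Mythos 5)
Your proof is correct, and in fact it supplies an argument the paper omits entirely: the lemma is stated there as a classical fact going back to Laisant, with no proof given. Your route---establish that the factorial-base digit map is a bijection from $\{0,\ldots,n!-1\}$ onto the product set $\prod_{i=1}^{n}\{0,\ldots,i-1\}$ of the same cardinality $n!$, then observe that the pushforward of the uniform measure is uniform on the product and hence factors into uniform marginals---is the standard and essentially the only natural proof, and it is complete: existence and uniqueness of the digits follow from the greedy division $X_n=\lfloor U/(n-1)!\rfloor$ with remainder strictly below $(n-1)!$, and the identity $\frac{1}{n!}=\prod_{i=1}^{n}\frac{1}{i}$ gives uniformity and independence simultaneously.

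One small erratum in your closing aside: the claim that $X_n\equiv U\pmod{n}$ is false in general, since $X_n=\lfloor U/(n-1)!\rfloor$ and $(n-1)!$ is not congruent to $1$ modulo $n$ (for $n=3$ and $U=2$ one gets $X_3=1$ while $U\bmod 3=2$). The digit extracted by reduction modulo a small integer is the \emph{low-order} one ($U\bmod 2=X_2$, since all terms $X_i(i-1)!$ with $i\geqslant 3$ are even and $X_1=0$), so the inductive variant should either peel digits from the bottom or keep the top-digit formulation $X_n=\lfloor U/(n-1)!\rfloor$, whose conditional-uniformity claim you state correctly. This does not affect your main argument, which stands on its own.
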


Laisant observed that a permutation could then be constructed by taking
the sorted list of elements, and taking the $\cramped{X_n}$-th as first
element of the permutation, then the $\cramped{X_{n-1}}$-th of the
remaining elements as second element of the permutation, and so
on\footnote{This idea is often associated with Lehmer who rediscovered
  it~\cite{Lehmer60}.}. What is remarkable is that using this
construction, it is possible to directly compute the number of inversions
of the resulting permutation by summing all $\cramped{X_i}$, where
inversions $I(\sigma)$ of a permutation $\sigma$ are defined as
\begin{align}
  I(\sigma) := \set{ (i,j) \in \N^2\ |\ i < j \text{ and } \sigma_i >
    \sigma_j}\text{.}
\end{align}
Unfortunately the algorithm requires the use of a chained list instead of
an array, and thus has quadratic time complexity---which is
prohibitive\footnote{Nevertheless it is notable that summing the
  $\cramped{X_i}$ (without needing to compute the actual permutation)
  yields an interesting way of generating a random variable distributed as
  the number of inversions in a permutation of size $n$.}.

We can use a different bijection of integers with permutations which can
be computed in linear time, by simply using the $\cramped{X_i}$ as input
for the previously described Fisher-Yates shuffle. In this way, we
optimally generate a random permutation from a discrete uniform variate of
range $n!-1$, and show how to attain information theoretic optimality in
a much less contrived way than described in the previous subsection.

A caveat though is that word size may become a real issue: with 32-bit
registers one can obtain permutations up to $n=12$; with 64-bit, up to
$n=21$; with 128-bit up to $n=33$.

\begin{remark*}
  The general idea of numbering or indexing (i.e., establishing
  a bijection with a continuous range of integers containing zero) all
  objects of a combinatorial class of a given size is often called
  \emph{ranking} and the inverse transformation---obtaining an object from
  its rank---is called \emph{unranking}, but has also been referred to as
  the \emph{decoding method}~\cite[\S XIII.1.2]{Devroye86}.

  For a long time, devising such unranking schemes often relied on luck or
  combinatorial acumen. Mart\'{\i}nez and Molinero~\cite{MaMo01}
  eventually established a general approach by adapting the previously
  mentioned recursive random generation method of
  Flajolet~\etal~\cite{FlZiVa94}. While this approach is not necessarily
  efficient, it provides a usable algorithm to attain random-bit
  optimality for the random generation of many combinatorial structures.
\end{remark*}

%%%%%%%%%%%%%%%%%%%%%%
\section{Conclusion}

It would have been conceivable that this article yield a theoretical
algorithm of which the sole virtue would have been to provide
a theoretical optimal complexity, while proving less than useful for
practical use.

But unexpectedly, it turns out that the extra buffering inherent in
consuming randomness random-bit-by-random-bit\footnote{The implementation
  of the $\mathrm{flip}()$ function. It involves: drawing a random 32-bit
  int, storing it in a temporary variable, and then extracting each bit as
  needed, while making sure to refill the variable once all 32 bits have
  been used.}, although time consuming, is more than compensated by the
increased efficiency in using random bits compared with most common
methods.

It remains to be seen whether this is still the case on newer CPUs which
contain embedded instructions for hardware pseudo random generation.
However there are arguments that support this: first, assuming that
hardware pseudo random generation is to eventually become widespread
enough for software to take advantage of it, it seems likely to take
a significant time to be common; second, the computer architecture shift
seems to be towards RISC architectures which are not burdened with such
complex instructions.

\paragraph{Prospective future work.}

The result presented here interestingly yields, as a direct consequence,
the expected cost of the \emph{alias method}, a popular method to simulate
discrete distributions which are known explicitly as a histogram, also
known as \emph{sampling with replacement}. This method is often said to
have constant time complexity, but that is under the model where discrete
uniform variables are generated in constant time.
 
There are many different applications which are still to be examined:
several classical algorithms, which use discrete (and continuous) uniform
random variables, where the random bit cost is as of yet unknown.

Of particular interest, the process known as \emph{sampling without
  replacement}, or sampling from a discrete distribution which evolves in
a fairly predictable manner. The most promising algorithms for this
problem follow the work of Wong and Easton~\cite{WoEa80}, which uses
a partial sum tree. It remains to be seen what is the overall bit
complexity of this algorithm, and whether it can be improved (for instance
by choosing a specific type of tree).

%%%%%%%%%%%%%%%%%%%%%%%%%%%%%%%
\section*{Acknowledgments}

I am very grateful to Mich\`{e}le Soria for her careful reading of
drafts of this article, and her valuable input; I would also like to
thank Philippe Dumas for discussions on the Mellin analysis and Axel
Bacher for a discussion on Lehmer codes. Finally, I wish to warmly
thank Kirone Mallick for his encouragement in my pursuit of concrete
applications for this algorithm, in theoretical and statistical
physics.

\bibliographystyle{plain}
\bibliography{discrete-uniform}

\appendix

\section{\label{sec:fdr-imp}Implementation of the main {\algFDRs} algorithm}

This proposed implementation makes use of two non-standard packages: the
Boost library's standard integer definition, to make sure that the buffer
integer variable has the correct size; and the Mersenne Twister algorithm
for the random generation of the 32-bit integers themselves (the code for
this is wildly available). Note that using the correct integer type on
a given machine will do; as will using another random integer generator
than MT algorithm.

\begin{verbatim}
 #include <cstdlib>
 #include <boost/cstdint.hpp>    // Fixed size integers
 #include "mt19937ar.hpp"        // Mersenne Twister
 
 using namespace std;
 
 // For benchmarking purposes only
 static uint64_t flip_count = 0;
 
 // Flip buffering variables
 static uint32_t flip_word = 0;
 static int flip_pos = 0;
 
 int flip(void)
 {
   if(flip_pos == 0) {
     flip_word = genrand_int32();
     flip_pos = 32;
   }
 
   flip_count++;
   flip_pos--;
   return (flip_word & (1 << flip_pos)) >> flip_pos;
 }
 
 
 inline uint32_t algFDR(unsigned int n)
 {
   uint32_t v = 1, c = 0;
   while(true)
   {
     v = v << 1;
     c = (c << 1) + flip();
     if(v >= n)
     {
       if(c < n) return c;
       else
       {
         v = v - n;
         c = c - n;
       }
     }
   }
 }
\end{verbatim}

\section{\label{sec:fdr-ratbern}Simulating rational Bernoulli variables}

There is a well-known idea in random
generation~\cite[XV.1.2]{Devroye86b}, to efficiently draw a random
Bernoulli variable of parameter $p$: draw a geometric random variable
of parameter $1/2$, $k\in\GeometricLaw{1/2}$; then return, as result
of the Bernoulli trial, the $k$-th bit in the dyadic representation of
$p$.

Interestingly, this idea was already known to physicists, as evidenced
by an early paper by Pierre~\etal~\cite{PiGiSc87}, but seems not to be
commonly used today in Monte-Carlo implementations. Internal
simulations show that for typical Boltzmann energy simulations drawing
Bernoulli variables in this way consumes 16 times fewer random bits,
and that simulations are accelerated by a 4 to 6 factor (this is less
impressive than the number of saved bits because of the accounting
overhead required to buffer 32-bit integers into single flips).

The limitation of this algorithm is that obtaining the dyadic
representation of any $p$ is not a trivial matter. Fortunately for
rational numbers it is simple enough, and although this is not a new
contribution, for the sake of completeness we illustrate it in
Figure~\ref{fig:ratbern}.

\begin{figure}[h]
  \centering
  \raisebox{-\height}{{\begin{algo}[13.5em]
    \Function{BinaryBase}{$k/n$}
    \State $v \gets k$
    \Loop
      \State $v \gets 2v$
      \If{$v\geqslant n$}
        \State $v\gets v - n$
        \State {output} 1
      \Else
        \State {output} 0
      \EndIf
    \EndLoop
    \EndFunction
  \end{algo}}}\hspace{2em}\raisebox{-\height}{{\begin{algo}[12.5em]
    \Function{Bernoulli}{$k/n$}
    \State $v \gets k$
    \Repeat
      \State $v \gets 2v$
      \If{$v\geqslant n$}
        \State $v\gets v - n$
        \State $b\gets 1$
      \Else
        \State $b\gets 0$
      \EndIf
    \Until{flip() = 1}
    \State\Return $b$
    \EndFunction
  \end{algo}}}%
{
  \caption{\label{fig:ratbern}A simple algorithm to output the binary
    decomposition of a rational $k/n$, $k<n$, and the corresponding
    algorithm that simulates a Bernoulli distribution of parameter
    $p=k/n$. Neither algorithm is dependent on the required precision of
    the binary expansion. They both use only $1+\log_2 n$ bits of space,
    and require only one shift, one substraction and one comparison per
    iteration. The Bernoulli simulation algorithm consumes on average two
    flips (random bits).}}
\end{figure}

\begin{remark*}
  With Knuth and Yao's theorem, this algorithm can be shown to be
  optimal: indeed, it simply requires drawing a geometric variable of
  parameter $1/2$, which takes on average $2$ bits. Coincidentally,
  that is the optimal cost of drawing any Bernoulli variable. Recall
  the $\nu$ function defined as,
  \begin{align*}\tag{\ref{eq:ky-nu}}
    \nu(x) = \sum_{k=0}^\infty \frac{\left\{2^k\,x\right\}}{2^k}\text{.}
  \end{align*}
  From the results recalled in Subsection~\ref{subsec:ky-ddg}, we have
  that the optimal average cost of drawing a random Bernoulli variable
  of parameter $p$ is,
  \begin{align*}
    \nu(p)+\nu(1-p) &= \sum_{k=0}^\infty \frac{\left\{2^k\,p\right\}}{2^k} +
    \sum_{k=0}^\infty \frac{\left\{2^k\,(1-p)\right\}}{2^k}\\
    &= \sum_{k=0}^\infty \frac{\left\{2^k\,p\right\}+
      \left\{2^k\,(1-p)\right\}}{2^k} %
    = \sum_{k=0}^\infty \frac{1}{2^k} = 2
  \end{align*}
  hence the optimality.
\end{remark*}

\end{document}